\documentclass[nolinenumbers]{jfm}

\usepackage{graphicx}
\usepackage{newtxtext}
\usepackage{newtxmath}
\usepackage{natbib}
\usepackage{hyperref}
\hypersetup{
    colorlinks = true,
    urlcolor   = blue,
    citecolor  = black,
}

\usepackage{amsmath}
\usepackage{amssymb}
\newtheorem{proposition}{Proposition}[section]

\newcommand{\RomanNumeralCaps}[1]
\linenumbers

\title{Extending convolutional neural networks to finite-volume meshes for fine-scale reconstruction of wall-bounded turbulence}

\author{Hesam Tofighian\aff{1,2}
  \corresp{\email{hesam.tofighian@kit.edu}},
  Jordan A. Denev\aff{1}
 \and Nikolai Kornev\aff{2}}

\affiliation{\aff{1}Scientific Computing Center, Department of Scientific Computing and Mathematics, Karlsruhe Institute of Technology, Karlsruhe 76131, Germany
\aff{2}Chair of Modelling and Simulation, University of Rostock, Rostock 18051, Germany}

\begin{document}
\maketitle

\begin{abstract}
Standard convolutional neural networks for turbulence super-resolution are restricted to uniform structured grids, preventing their direct application to the non-uniform and non-orthogonal meshes commonly used in wall-bounded flows and complex geometries. We introduce a finite-volume convolution unit (FVCU) that removes this restriction. The key idea is to construct convolutional layers using finite-volume operators that are applicable to non-uniform and unstructured meshes. To this end, we use an exact representation of a 3D convolution as a linear combination of successive axis-wise convolutions. We adopt axis-wise gradient, averaging, and identity operators, because they can be evaluated on arbitrary polyhedral cells using finite-volume discretization schemes. The resulting FVCU is well-defined on general polyhedral meshes, serving as a drop-in replacement for conventional convolution layers. The FVCU is embedded in a super-resolution neural network and trained to reconstruct three-dimensional velocity fields from $4\times$ coarsened low-resolution fields in turbulent channel flow on a non-uniform orthogonal mesh, and in circular-pipe flow on a non-orthogonal O-grid mesh. Inspired by the role of the Van Driest damping function in near-wall subgrid-scale modelling in large-eddy simulations, this function is supplied as a conditional input to encode wall proximity. An ablation study confirms that the network interprets this input in a physically consistent manner. In both configurations, the model accurately reconstructs instantaneous velocity fields, Reynolds-stress profiles and their anisotropy, and energy spectra beyond the Nyquist limit of the low-resolution mesh. On an elliptic pipe unseen during training, the model reproduces the main flow features and local statistics, with a modest overprediction of small-scale energy. These results demonstrate a practical framework for extending CNN-based turbulent-flow super-resolution to general finite-volume meshes.
\end{abstract}

\begin{keywords}
Turbulence; Super-resolution; Deep learning; Unstructured mesh
\end{keywords}

\section{Introduction}\label{sec:intro}
Turbulent flows are governed by the interaction of vortical structures whose sizes range from the large energy-containing motions down to the finest dissipative structures. Resolving this full hierarchy through direct numerical simulation (DNS) demands grids fine enough to capture the smallest motions everywhere, at a cost that grows with Reynolds number \citep{pope2001turbulent}. Large-eddy simulation alleviates this problem by filtering out structures below the grid cutoff, but the resulting fields lack the small-scale content on which processes such as dissipation, particle dispersion, and near-wall momentum transfer depend. Experimental measurement techniques such as particle image velocimetry (PIV) are similarly constrained by their finite spatial resolution, which can significantly affect the accuracy of turbulence statistics, and obtaining finer resolution demands more expensive equipment \citep{ayegba2020turbulence}. The distribution of turbulent length scales is generally not spatially uniform. In wall-bounded flows, for instance, the characteristic size of turbulent structures decreases significantly toward the wall, reaching length scales in the viscous sublayer that are orders of magnitude smaller than in the bulk flow. Practical computational grids reflect the spatial non-uniformity of the relevant scales by clustering cells where small structures concentrate and stretching them elsewhere, producing meshes that are inherently non-uniform and, in complex geometries, often non-orthogonal. Obtaining high-resolution (HR) turbulence data on such configurations therefore remains a challenge in fluid dynamics.

A fundamentally similar challenge, namely the recovering fine-scale spatial detail from low-resolution (LR) data, has been extensively studied in computer vision through image super-resolution techniques. \citet{dong2015image} first demonstrated that a convolutional neural network (CNN) trained to map LR images directly to their HR counterparts could outperform classical interpolation methods. Reconstruction quality has since been progressively improved through deeper residual architectures \citep{kim2016accurate} and the reuse of intermediate features via residual dense blocks \citep{zhang2018residual} and their nested variant, the residual-in-residual dense block \citep{wang2018esrgan}. Further improvements in super-resolution realism have been achieved by incorporating generative adversarial training \citep{ledig2017photo}, transformer architectures \citep{liang2021swinir}, and more recently, employing diffusion-based generative models \citep{saharia2022image}.

Inspired by these developments in image super-resolution, data-driven models have been increasingly adopted to reconstruct HR turbulent flow fields from coarse simulation or experimental data. However, because these models inherit the standard convolution operator from image processing, they are restricted to the cases with uniform, structured grids. \citet{fukami2019super} pioneered the application of super-resolution techniques to turbulent flows, using machine learning models based on CNNs to reconstruct two-dimensional (2D) decaying turbulent velocity and vorticity fields from extremely coarse input data. \citet{liu2020deep} extended this framework and showed that incorporating consecutive LR snapshots improves reconstruction accuracy. Their CNN model was trained on 2D slices from forced isotropic turbulence and turbulent channel flow. For the latter, because standard convolutional operators require uniform grid spacing, the model was restricted to wall-parallel planes on which the streamwise and spanwise grids are equidistant. Planes were sampled across the viscous sublayer, buffer layer, log-law region, and outer layer, exposing the network to turbulence at different wall distances during training. \citet{fukami2021machine} extended this line of work to spatio-temporal super resolution, combining spatial reconstruction with the recovery of intermediate snapshots between sparsely sampled time frames. To apply three-dimensional (3D) convolutions to turbulent channel flow data, the originally non-uniform wall-normal mesh was first interpolated onto a uniform grid prior to training. \citet{zhou2022robust} also employed 3D convolutions for volumetric super resolution of homogeneous isotropic turbulence and introduced a patchwise training strategy in which the model is trained on small patches extracted from DNS at multiple Reynolds numbers, enabling it to generalize to higher Reynolds numbers not seen during training. Beyond purely supervised approaches, \citet{deng2019super} employed generative adversarial networks to reconstruct turbulent velocity fields from PIV measurement images, and \citet{bode2021using} further incorporated physics-informed loss terms, such as a continuity residual, into the adversarial training framework to improve the physical consistency of the reconstructed fields. Unsupervised learning has also been explored to remove the need for paired training data. \citet{kim2021unsupervised} employed a cycle-consistent generative adversarial network (CycleGAN) for super-resolution reconstruction of turbulent channel flow from unpaired data, and \citet{maejima2025unsupervised} extended the approach to subgrid-scale modelling for very coarse-grid large-eddy simulations of channel flow. Both studies were similarly restricted to wall-parallel planes by the uniform-grid requirement of standard convolutions. To account for the variation of turbulence across the boundary layer, \citet{kim2021unsupervised} trained separate networks at individual wall-normal locations, while \citet{maejima2025unsupervised} adopted a strategy similar to \citet{liu2020deep}, training a single network on slices sampled at multiple wall-normal distances. All of the above studies rely on standard convolutional operators that assume uniform grid spacing. This assumption is naturally satisfied in cases such as homogeneous isotropic turbulence, image-based PIV data, and wall-parallel planes of channel flows, but it excludes data on non-uniform and non-orthogonal meshes.

Only a few recent studies have addressed this restriction by designing super-resolution architectures for more general grid configurations. \citet{du2024conditional} employed the SIREN network architecture \citep{sitzmann2020implicit} to represent flow fields as continuous functions of spatial coordinates (neural fields), and applied a diffusion-based generative model in the resulting latent space rather than directly on mesh nodes. Since the neural field can be queried at arbitrary locations, the framework naturally accommodates unstructured grids. Turbulent flow super-resolution was demonstrated as one of conditional generation tasks of their method. However, because the learned representation is tied to the specific training case, a new model must be trained for each flow configuration and geometry. \citet{fukami2023super} suggested graph neural networks (GNNs) as a promising approach for developing super-resolution models on unstructured grids. Following this direction, \citet{barwey2025mesh} proposed a GNN model based on the message-passing framework of MeshGraphNets \citep{pfaff2020learning} for super-resolution on spectral-element grids. In this method, message passing layers operate locally on graph representations of each hexahedral element to recover fine-scale flow fields by increasing the polynomial order of the element. They demonstrated the method on Taylor–Green vortex and backward-facing step flows and examined its generalization to a cavity flow.

In the present work, we explore an alternative route to super-resolution on general meshes by reformulating the convolution operator while retaining established CNN architectures. To this end, we propose a CNN-based architecture that, unlike standard CNNs, can operate directly on unstructured finite-volume meshes by expressing 3D convolution kernels as linear combinations of compositions of axis-wise operators. We show that this decomposition admits an exact representation on uniform Cartesian grids in terms of gradient, averaging, and identity operators, which we then extend to general meshes using standard finite-volume discretizations defined on arbitrary polyhedral cells. The resulting finite-volume convolution unit (FVCU) serves as a replacement for the standard convolution layer and can be embedded in established deep-learning architectures. We demonstrate the method on super-resolution of turbulent channel and circular-pipe flows, which provide complementary assessments on non-uniform orthogonal and non-orthogonal meshes, respectively. Generalization to an unseen geometry is further assessed using turbulent elliptic-pipe flow. Inspired by the conditional input strategy of \citet{tofighian2024conditional} and \citet{shamooni2025super}, who conditioned the generator on subgrid-scale information, we additionally supply the Van Driest damping function as a wall-proximity encoding. This extra input channel provides the network with explicit information about wall proximity to support the reconstruction of strongly anisotropic turbulence in wall-bounded flows.

The remainder of the paper is organized as follows. Section~\ref{sec:method} presents the axis-wise decomposition of 3D convolutions, introduces the finite-volume basis operators, and describes the resulting network architecture along with the numerical setup used for data generation and training. Results are presented in Section~\ref{sec:results}, where the reconstruction accuracy is assessed through instantaneous fields, turbulence statistics, and spectral analysis, followed by an ablation study on the wall-proximity encoding. Conclusions are drawn in Section~\ref{sec:Conclusion}.

\section{Methodology}\label{sec:method}
Standard CNNs assume uniform grid spacing, which prevents their direct application to the non-uniform and unstructured computational grid commonly used in finite-volume simulations. To overcome this limitation and enable CNN-based super-resolution on general computational grids, we propose a finite-volume-based convolutional unit by expressing 3D convolution operators as linear combinations of axis-wise operators. By choosing axis-wise operators that can be defined within the finite-volume framework, convolutional layers can be implemented using operators that remain well-defined on non-uniform and unstructured grids.

\subsection{Axis-wise representation of 3D convolutions}
Let $X:\mathbb{Z}^3\to\mathbb{R}$ denote a discrete feature map and let $K\in\mathbb{R}^{3\times3\times3}$ be a convolution kernel. We denote by $\star$ the standard 3D convolution, and by $\star_x$, $\star_y$, and $\star_z$ the axis-wise convolutions along the corresponding coordinate directions. For a fixed basis $\{b_1,b_2,b_3\}\subset\mathbb{R}^3$, the 3D convolution $K\star X_{in}$ admits the representation
\begin{equation}
\label{eq:main_method}
K\star X_{in}
=
\sum_{p,q,r=1}^{3}
c_{pqr}\,
\Bigl(
b_r\star_z\bigl(b_q\star_y(b_p\star_x X_{in})\bigr)
\Bigr).
\end{equation}
This identity is exact; existence and uniqueness of the coefficients $c_{pqr}$ are established in Appendix~A.

A direct computational realization of this representation is illustrated in Fig. \ref{fig:method}(a). Three axis-wise filters are applied sequentially along the $x$-, $y$-, and $z$-directions, producing $27 C_{in}$ intermediate feature maps that are combined linearly to recover $X_{out}=K\star X_{in}$. Here, $C_{in}$ denotes the number of input channels, and the factor $27=3^3$ arises from applying three basis operators in each of the three coordinate directions. Although this realization is exact, it requires simultaneous storage of all $27C_{in}$ intermediate feature maps and therefore results in a large peak memory footprint.

The same operator can be evaluated with a reduced peak memory footprint by reordering the summation
in Eq. \eqref{eq:main_method} as
\begin{equation}
\label{eq:reordered}
K\star X_{in}
=
\sum_{r=1}^{3}
b_r\star_z\left(
\sum_{p,q=1}^{3}
c_{pqr}\,
\bigl(b_q\star_y(b_p\star_x X_{in})\bigr)
\right).
\end{equation}
In this form, the linear combination over the $x$- and $y$-directional responses is evaluated prior to the final convolution in the $z$-direction. The corresponding realization, shown in Fig. \ref{fig:method}(b), requires storage of only the $9C_{in}$ intermediate feature maps associated with the $x$- and $y$-filter banks. The reduction in peak memory follows solely from the evaluation order; the two realizations define the same convolution operator. We note that the ordering of the axis-wise convolutions in Eqs.~\eqref{eq:main_method} and \eqref{eq:reordered} is arbitrary; any permutation of $x$, $y$, and $z$ yields the same operator.

\begin{figure}
  \centering
  \includegraphics[width=0.75\textwidth]{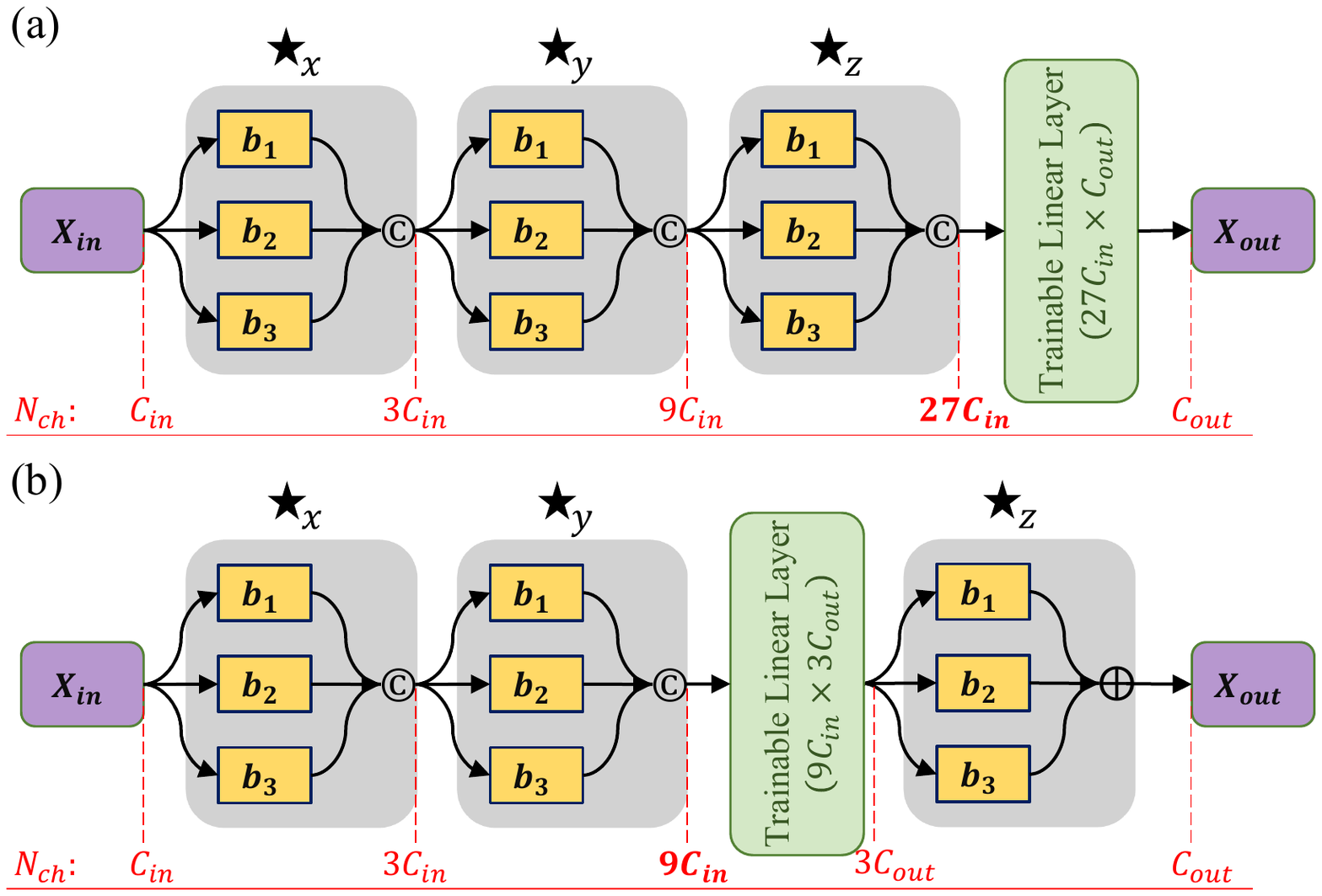}
  \caption{Axis-wise realizations of the 3D convolution operator defined in Eq.~\eqref{eq:main_method} and \eqref{eq:reordered}. Fixed axis-wise basis operators ${b_1,b_2,b_3}$ are applied along the $x$-, $y$-, and $z$-directions (grey blocks). \textcircled{c} denotes channel-wise concatenation and $\oplus$ denotes summation. (a) Direct realization of Eq.~\eqref{eq:main_method}, producing $27C_{\mathrm{in}}$ intermediate feature maps combined by a trainable linear layer. (b) Memory-efficient realization corresponding to Eq.~\eqref{eq:reordered}, in which the linear combination of $x$- and $y$-directional responses precedes the final $z$-directional convolution, reducing peak storage to $9C_{\mathrm{in}}$. Both realizations define the same operator.}
\label{fig:method}
\end{figure}

\subsection{Finite-volume convolution unit (FVCU)}
\label{fvcu}
The decomposition in Eq.~\eqref{eq:reordered} holds for any basis $\{b_1,b_2,b_3\}$ of $\mathbb{R}^3$ and therefore leaves the choice of the axis-wise operators open. We first fix this basis among the finite-difference operators on a uniform grid, and then replace each operator by its finite-volume discretization, so that the resulting layer remains well defined on non-uniform and unstructured meshes.

Consider a uniform Cartesian grid with unit cell size, where $X_{i-1}$, $X_i$, and $X_{i+1}$ are the values at a node and its two $x$-neighbours. We choose the identity, gradient, and averaging operators as the axis-wise basis, which on a uniform grid take the second-order forms
\begin{equation}
\label{eq:fd-ops}
\begin{aligned}
\mathcal I[X]_i &= X_i, &&\qquad b_1 = [\,0\;\;\;1\;\;\;0\,],\\[2pt]
\mathcal G_x[X]_i &= \frac{X_{i+1}-X_{i-1}}{2},
&&\qquad b_2 = \tfrac12\,[\,{-1}\;\;\;0\;\;\;1\,],\\[2pt]
\mathcal A_x[X]_i &= \frac{X_{i-1}+2X_i+X_{i+1}}{4},
&&\qquad b_3 = \tfrac14\,[\,1\;\;\;2\;\;\;1\,],
\end{aligned}
\end{equation}
where $b_1$, $b_2$, and $b_3$ are the corresponding coefficient vectors in $\mathbb{R}^3$, and analogous definitions apply along the $y$- and $z$-directions. These three vectors are linearly independent and therefore form a basis, so based on Eq.~\eqref{eq:reordered}, any kernel $K\in\mathbb{R}^{3\times3\times3}$ is recovered by an appropriate setting of the trainable coefficients $\{c_{pqr}\}$ (Appendix~\ref{appA}). For a cell size $h\neq1$ the gradient operator acquires a factor $1/h$, which leaves the three vectors linearly independent and is absorbed during training.

On a general mesh the finite-difference forms in Eq.~\eqref{eq:fd-ops} are no longer available, since they rely on a structured arrangement of neighbouring cells, but each of the three operators has a well-established finite-volume discretization on arbitrary polyhedral cells. Let $X$ be a cell-centered feature map defined on a finite-volume mesh, and let $P$ be a cell with volume $V_P$ and face set $\mathcal F(P)$. Each face $f\in\mathcal F(P)$ has area $A_f$ and outward unit normal $\mathbf n_f=(n_{f,x},n_{f,y},n_{f,z})$, and $X_f$ is the face-interpolated value of $X$. The identity operator is unchanged, $\mathcal I[X]_P = X_P$, while the gradient operator along $\mathbf e_x$ is given by the finite-volume approximation

\begin{equation}
\mathcal G_x[X]_P
=
(\nabla X)_P \cdot \mathbf e_x
\;\approx\;
\frac{1}{V_P}
\sum_{f\in\mathcal F(P)}
X_f\, n_{f,x}\, A_f.
\label{eq:ufvm-gradx}
\end{equation}
The averaging operator along the same direction follows from the non-negative
weights
\begin{equation}
w_{f,x} = A_f\,|n_{f,x}|,
\qquad
W_x = \sum_{f\in\mathcal F(P)} w_{f,x},
\end{equation}
which give the $x$-directional averaging operator
\begin{equation}
\mathcal A_x[X]_P
=
\frac{1}{W_x}
\sum_{f\in\mathcal F(P)}
w_{f,x}\,X_f,
\label{eq:ufvm-avgx}
\end{equation}
a weighted surface average over faces whose normals have a non-zero projection onto the $x$-direction.

The cell-face values $X_f$ required by these operators are obtained by second-order linear interpolation. At boundary faces, boundary conditions are applied directly; periodic matching is enforced on periodic boundaries, and Dirichlet boundary conditions with zero value are applied on wall boundaries. The Dirichlet boundary condition is analogous to the zero-padding commonly used in CNNs, which has been shown to be robust in practice. On a uniform Cartesian mesh with unit cell size, Eqs.~\eqref{eq:ufvm-gradx} and \eqref{eq:ufvm-avgx} reduce to the operators in Eq.~\eqref{eq:fd-ops}.

We encapsulate this realization of Eq.~\eqref{eq:reordered} in a block, termed the \emph{finite-volume convolution unit} (FVCU), illustrated in Fig.~\ref{fig:fvcu}. The unit applies the fixed operators $\{\mathcal I,\mathcal G_x,\mathcal A_x\}$ followed by $\{\mathcal I,\mathcal G_y,\mathcal A_y\}$, concatenates the resulting $9C_{\mathrm{in}}$ directional responses, and combines them through a pointwise trainable linear map corresponding to the coefficients $\{c_{pqr}\}$. The mixed features are then processed by the $z$-directional operators $\{\mathcal I,\mathcal G_z,\mathcal A_z\}$ and summed to produce the output feature maps.

\begin{figure}
  \centering
  \includegraphics[width=0.6\textwidth]{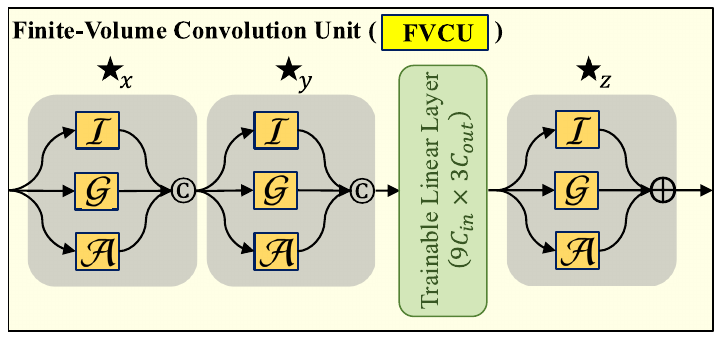}
  \caption{Finite-volume convolution unit (FVCU). Fixed basis operators ($\mathcal{I}$, $\mathcal{G}$, $\mathcal{A}$) are applied sequentially along the $x$-, $y$-, and $z$-directions. \textcircled{c}  denotes channel-wise concatenation, and $\oplus$ denotes summation. The trainable linear layer corresponds to the coefficients $\{c_{pqr}\}$ in Eq.~\eqref{eq:reordered}.}
\label{fig:fvcu}
\end{figure}

\subsection{Super-resolution network architecture}
\label{sec:sr-architecture}
This work employs the residual-in-residual dense architecture of \citet{wang2018esrgan} for super-resolution. We replace the standard convolution layers with the FVCU introduced in Sec.~\ref{fvcu}. The resulting network is referred to as the finite-volume convolutional neural network (FVCNN) hereafter. The network maps a LR cell-centered velocity field on a coarse finite-volume mesh to a super-resolved (SR) velocity field on a refined finite-volume mesh. The network takes four input channels: the normalized LR velocity components \((u^{LR},v^{LR},w^{LR})\) and the Van Driest damping function \(D\) \citep{van1956turbulent},
\begin{equation}
D = 1 - e^{-y^+/A^+}.
\label{eq:vandriest}
\end{equation}
where $y^+$ is the wall-normal distance in inner units and $A^+ = 26$ is the damping constant. The Van Driest damping function is widely used in large-eddy simulation to model the attenuation of turbulence in the near-wall region \citep{pope2001turbulent}. In the present work, it is used as an additional conditional input to encode wall proximity, which is essential in wall-bounded flows where turbulent structures and anisotropy depend strongly on distance from the wall. Expressing this information through \(D\), rather than \(y^+\), maps the unbounded wall coordinate onto a fixed range \([0,1]\), thereby facilitating learning and generalization of near-wall turbulence features. The network outputs the three normalized SR velocity components \((u^{SR},v^{SR},w^{SR})\)
on the associated refined mesh.

The architecture is summarized in Fig.~\ref{fig:architecture}. An initial FVCU maps the four-channel input to \(64\) feature channels. The network trunk consists of a single residual-in-residual dense block (RRDB), formed by three residual dense blocks (RDBs) connected in series. Each RDB contains five FVCU layers, in which intermediate features are concatenated along the channel dimension so that successive layers operate on the aggregation of preceding responses. Nonlinearity is introduced through Leaky Rectified Linear Unit (LReLU) activation functions. The output of each RDB is combined with its input through a residual connection with scaling factor $\beta=0.2$. At the RRDB level, the stack of three RDBs is wrapped by an additional residual connection, again with scaling factor $\beta=0.2$. A long skip connection further adds the RRDB output back to the shallow features, preserving large-scale content while allowing the trunk to focus on super-resolution corrections. 
\begin{figure}
  \centering
  \includegraphics[width=1.0\textwidth]{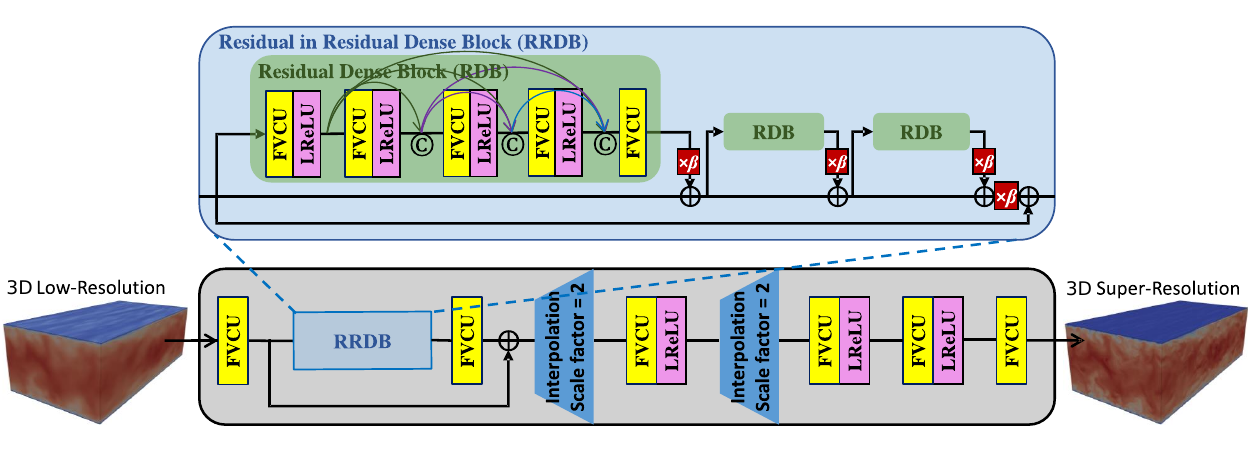}
  \caption{Super-resolution network composed of residual-in-residual dense blocks (RRDBs) built from FVCUs. \textcircled{c} denotes channel-wise concatenation and $\oplus$ denotes summation.}
\label{fig:architecture}
\end{figure}
Spatial refinement is performed in two stages, corresponding to \(2\times\) and \(4\times\) resolution increases. Features are interpolated from the coarse mesh to finer meshes at each stage and subsequently processed by additional FVCU layers to adapt them to the local geometry of the refined mesh. One FVCU is applied after the \(2\times\) refinement, followed by three FVCUs on the \(4\times\) mesh, with the final layer mapping the features to the normalized SR velocity components.

\subsection{Numerical simulation and data preparation}
\label{numeric}
Supervised super-resolution learning requires paired LR velocity fields and corresponding HR ground-truth data. In the present work, the training data are obtained from direct numerical simulations (DNS) of two canonical wall-bounded flows, turbulent plane channel flow and turbulent circular-pipe flow. The two configurations are complementary, since the channel mesh is orthogonal but non-uniform in the wall-normal direction, whereas the circular pipe mesh is both non-uniform and non-orthogonal, so that neither case is accessible to a standard 3D convolution. All three flows are governed by the incompressible Navier--Stokes equations,

\begin{equation}
\nabla\cdot\mathbf{u}=0,
\label{eq:cont}
\end{equation}
\begin{equation}
\frac{\partial\mathbf{u}}{\partial t}
+\mathbf{u}\cdot\nabla\mathbf{u}
=
-\nabla p
+\nu\nabla^2\mathbf{u}
+\mathbf{f},
\label{eq:mom}
\end{equation}

where $\mathbf{u}=(u,v,w)$ denotes the velocity vector, $p$ is the kinematic pressure, $\nu$ is the kinematic viscosity, and $\mathbf{f}$ is a body-force term applied in the streamwise direction to maintain a constant bulk flow rate.

For the channel flow, the computational domain represents a plane channel of half-height $\delta$ with dimensions $(L_x,L_y,L_z)=(6.4\delta,\,2\delta,\,3.2\delta)$, which is discretized using a finite-volume grid with resolution $(N_x,N_y,N_z)=(320,\,256,\,284)$. The body force is adjusted such that the statistically stationary flow corresponds to $Re_\tau=395$. Uniform grid spacing is employed in the streamwise and spanwise directions, corresponding to wall-unit spacings of $\Delta x^+\approx 7.8$ and $\Delta z^+\approx 4.4$. In the wall-normal direction, grid points are clustered toward the walls using a hyperbolic-tangent stretching, resulting in a near-wall spacing of $\Delta y^+\approx 0.6$ at the first off-wall cell and a gradual increase to $\Delta y^+\approx 5.8$ toward the channel centreline. Periodic boundary conditions are imposed in the streamwise ($x$) and spanwise ($z$) directions, while no-slip conditions are enforced at the two walls located at $y=\pm\delta$. The computational grid is illustrated in Fig.~\ref{fig:grid}(a), where every 8th grid line is shown for clarity. The inset compares the HR grid (blue) with the $4\times$ coarsened LR grid (purple) used as input to the super-resolution network.

The circular pipe configuration consists of a straight circular pipe of radius $R$ and axial length $L/R=15$, with the axis aligned with the streamwise direction. The body force is adjusted such that the statistically stationary flow corresponds to $Re_\tau=360$. Periodic boundary conditions are imposed in the streamwise direction and a no-slip condition is enforced at the pipe wall $r=R$. The cross-section is discretized with an O-grid (butterfly) topology, which avoids the singularity of a polar mesh at the pipe axis at the cost of introducing non-orthogonal cells, particularly in the transition region between the central block and the surrounding annular blocks. The mesh consists of $N_{\mathrm{ax}}=640$ cells in the axial direction and $N_{\mathrm{cs}}\approx 3.3\times10^{4}$ cells per cross-sectional plane, giving axial and azimuthal spacings of $\Delta x^+\approx 8.4$ and $R\Delta\theta^+\approx 5.7$. Cells are clustered towards the wall in the radial direction, with $\Delta r^+\approx 0.7$ at the first off-wall cell increasing to $\Delta r^+\approx 4.8$ near the pipe axis. The grid is shown in Fig.~\ref{fig:grid}(b), together with the corresponding $4\times$ coarsened LR mesh.

To assess generalization to an unseen geometry, an additional DNS is performed for turbulent flow through a straight elliptic pipe with semi-major axis $a$, semi-minor axis $b$, and cross-sectional aspect ratio $a/b=2$. The pipe axis is aligned with the streamwise direction, and the axial length is $L/b=15$. The friction Reynolds number is $Re_{\tau,h}=u_\tau D_h/(2\nu)=412$, where $D_h=4\pi ab/P$ is the hydraulic diameter, $P$ is the cross-sectional perimeter, and $u_\tau$ is calculated from the wall-shear stress averaged over the wall surface and time. Periodic boundary conditions are imposed in the streamwise direction, and a no-slip condition is enforced at the pipe wall. The flow is driven at a constant bulk flow rate using the same body-force procedure as in the other configurations. The cross-section is discretized with an O-grid (butterfly) topology. The mesh consists of $N_{\mathrm{ax}}=512$ cells in the axial direction and $N_{\mathrm{cs}}=55\,296$ cells per cross-sectional plane, giving an axial spacing of $\Delta x^+\approx9.3$. Cells are clustered towards the wall, with a wall-normal spacing of approximately $0.8$ in inner units at the first off-wall cell, increasing to approximately $9$ in the pipe interior. The grid is shown in Fig.~\ref{fig:grid}(c), together with the corresponding $4\times$ coarsened LR mesh.

\begin{figure}
  \centering
  \includegraphics[width=0.95\textwidth]{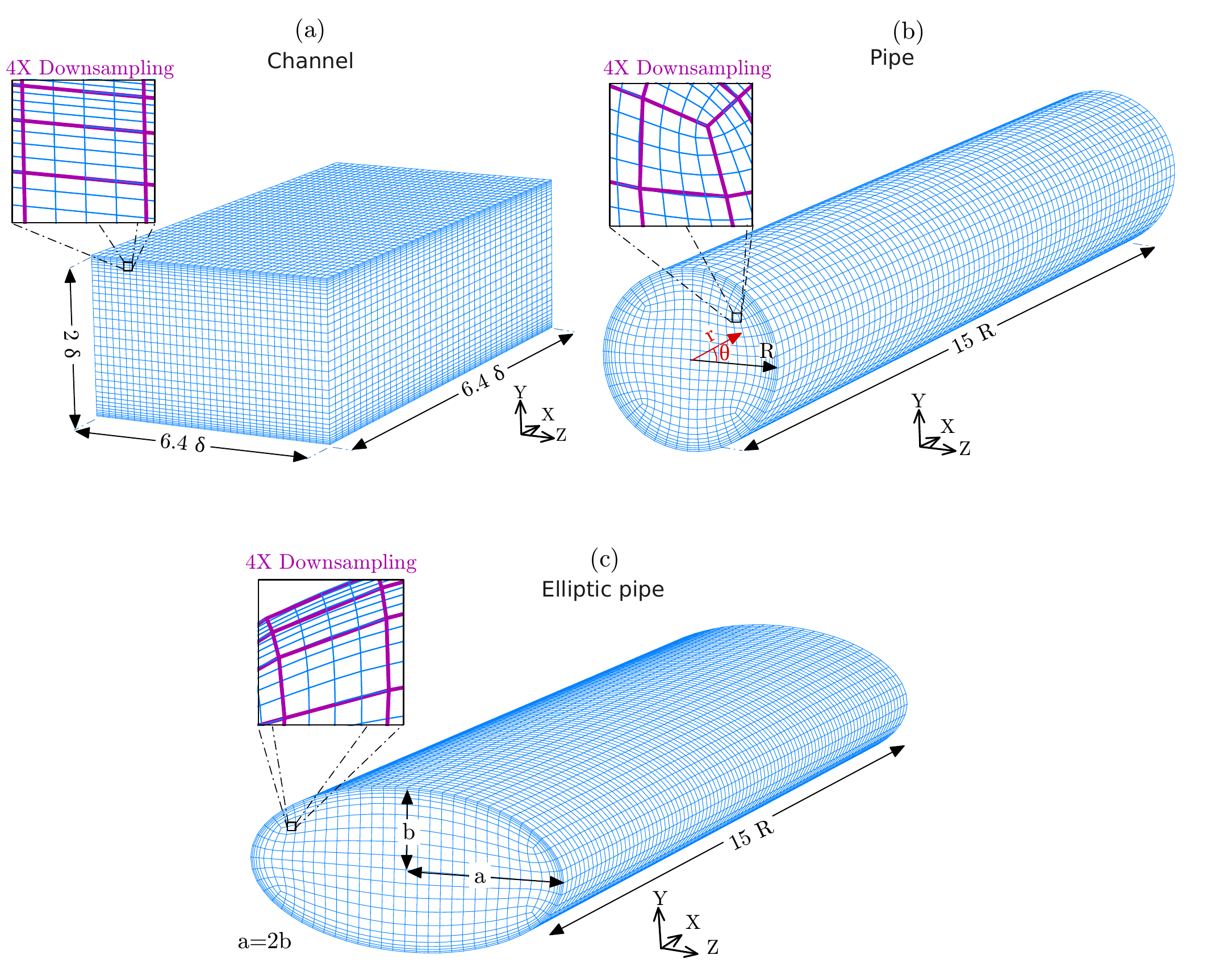}
  \caption{Computational domains and grids for (a) plane channel flow, (b) circular-pipe flow, and (c) elliptic-pipe flow with cross-sectional aspect ratio $a/b=2$. Here, $\delta$ is the channel half-height, $R$ is the circular pipe radius, and $a$ and $b$ are the semi-major and semi-minor axes of the elliptic cross-section, respectively. Every eighth grid line is shown in the main views for clarity. Insets illustrate the HR grid (blue) and the corresponding LR grid (purple), obtained by coarsening by a factor of four in each mesh direction.}
\label{fig:grid}
\end{figure}

The governing equations are solved using the open-source CFD package OpenFOAM\textregistered\ v2112. The diffusion and convection terms are discretized using standard second-order and fourth-order central differencing schemes, respectively. Temporal integration is performed using a second-order upwind Euler scheme, and the maximum Courant--Friedrichs--Lewy number is maintained below $0.2$ throughout the simulations. Velocity--pressure coupling is handled using the Pressure Implicit with Splitting of Operators (PISO) algorithm \citep{issa1986solution}.

To accelerate the development of turbulence, all three simulations are initialized using synthetic divergence-free inflow turbulence \citep{kroger2018generation}. After one flow-through time, once the domain is filled with synthetic turbulent structures, the streamwise boundary condition is switched to periodicity, and the flow is sustained by the body-force driving. The simulations are then continued for an additional ten flow-through times to ensure a statistically stationary state turbulence. From the stationary regime, $200$ three-dimensional snapshots are extracted for training and $100$ for testing in each of the channel and circular-pipe configurations. For the elliptic pipe, $50$ snapshots are extracted exclusively for testing, without retraining or fine-tuning the network. In all three configurations, snapshots are sampled at intervals of $\Delta T^+=100$ in inner units. Hereafter, $\langle \cdot \rangle$ denotes averaging over the test snapshots and the homogeneous directions, namely the streamwise and spanwise directions for the channel and the streamwise and azimuthal directions for the pipe.

To assess the quality of the DNS data used for training and evaluation, we compare first- and second-order turbulence statistics with the reference DNS of \citet{moser1999direct} for the channel and \citet{el2013direct} for the circular pipe. Figure~\ref{fig:valid}(a,c) shows the mean streamwise velocity profiles \(\langle u^+ \rangle\) in inner scaling, which are in close agreement with the reference profiles across the viscous sublayer, buffer layer, and logarithmic region. Figure~\ref{fig:valid}(b,d) shows the Reynolds-stress components \(\langle u_i' u_j' \rangle^+\). The normal stresses reproduce the reference near-wall peak locations and amplitudes, and the Reynolds shear stress exhibits the characteristic negative minimum in the buffer layer, consistent with the reference data for both configurations.

\begin{figure}
  \centering
  \includegraphics[width=1.0\textwidth]{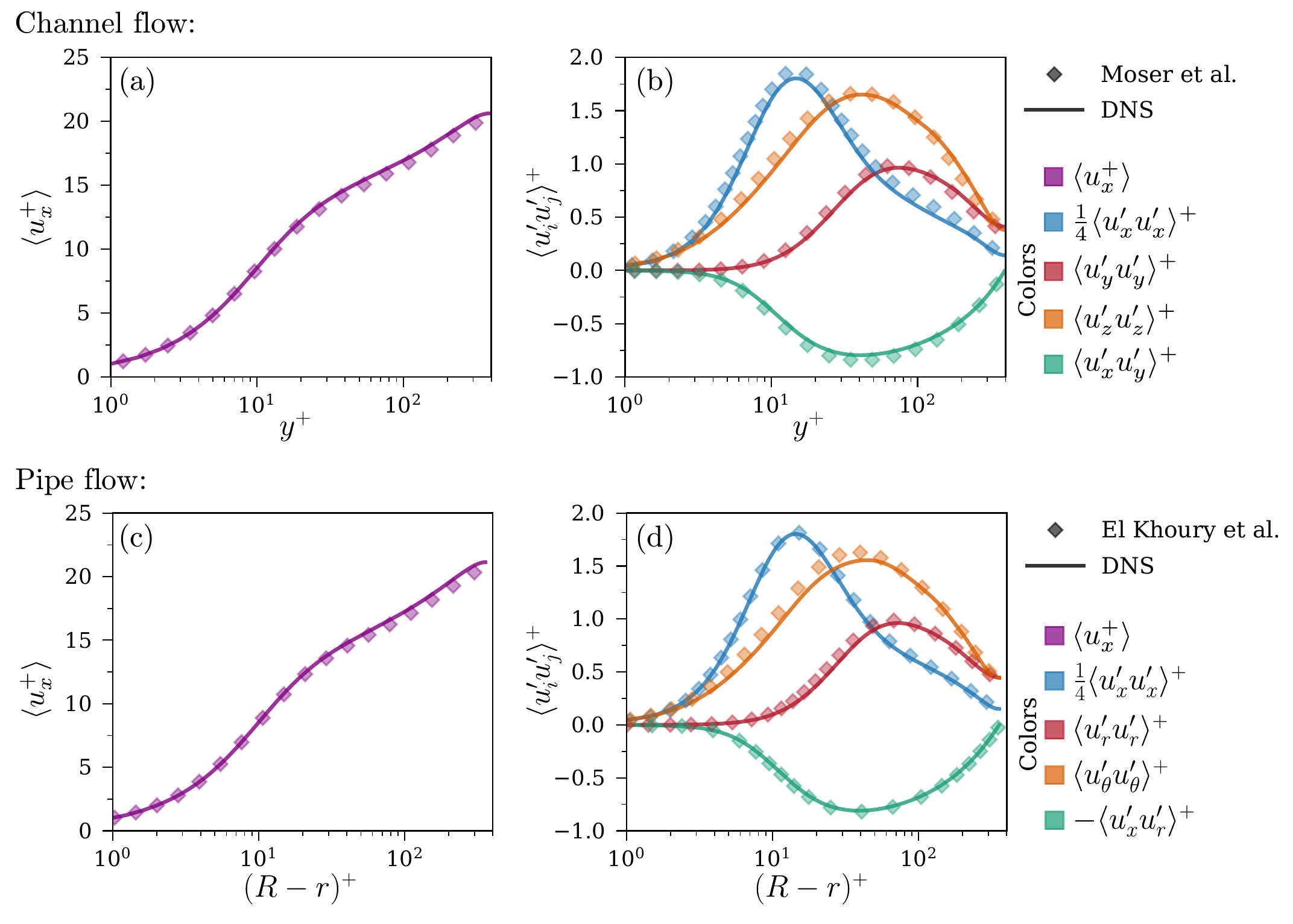}
  \caption{Validation of the present DNS against reference data. (a,\,b) Turbulent channel flow at $Re_\tau \approx 395$ compared with \cite{moser1999direct}. (c,\,d) Turbulent circular-pipe flow at $Re_\tau \approx 360$ compared with \cite{el2013direct}. (a,\,c) Mean streamwise velocity profile in inner units as a function of the wall distance. (b,\,d) Reynolds stress components $\langle u_i' u_j' \rangle^+$ in inner units. Symbols denote the reference data, and solid lines correspond to the present DNS.}
\label{fig:valid}
\end{figure}

The LR inputs are constructed from each DNS snapshot by low-pass filtering the HR DNS velocity field \(\mathbf{u}^{DNS}\) prior to downsampling. Filtering is performed by solving the anisotropic Helmholtz equation
\begin{equation}
\mathbf{u}^{F}-\nabla\cdot\!\left(\mathbf{D}\nabla\mathbf{u}^{F}\right)
=\mathbf{u}^{DNS},
\end{equation}
for the filtered field \(\mathbf{u}^{F}\). This provides a finite-volume-consistent alternative to an explicit Gaussian filter, which is not straightforward to
apply on unstructured meshes. The diffusion tensor is taken as \(\mathbf{D}=\mathrm{diag}(\nu_x,\nu_y,\nu_z)\), with \(\nu_i=\sigma_i^2/2\). The directional filter widths are defined as \(\sigma_i=C_i\,\Delta_i\) (\(i\in\{x,y,z\}\)), where \(\Delta_i\) is a local cell-size measure. In the present work we use \(C_i=5\) in all directions. The LR field \(\mathbf{u}^{LR}\) is then obtained by downsampling \(\mathbf{u}^{F}\) onto a mesh that is \(4\times\) coarser in each coordinate direction.

\subsection{Training}
\label{sec:training}
The network is trained in a supervised manner using paired LR inputs and DNS targets. The velocity fields are normalized by the mean velocity \(u_{\mathrm{in}}\) of the corresponding configuration, \(\tilde{\mathbf{u}}=\mathbf{u}/u_{\mathrm{in}}\). To enable finite-volume operations during training, each OpenFOAM data snapshot, together with the mesh topology, is converted into a PyTorch Geometric data format. All length scales are expressed in inner units during this conversion. The cell-centred fields are stored as node features, while mesh adjacency is represented by the list of cells sharing each face. The geometric quantities required by the finite-volume convolution units (FVCUs), namely cell volumes, face area vectors, and cell and face centroids, are stored as graph attributes. The network is trained by minimizing the objective
\begin{equation}
\mathcal{L}
=
\alpha_{\mathrm{vel}}\,\mathcal{L}_{\mathrm{vel}}
+
\alpha_{\mathrm{grad}}\,\mathcal{L}_{\mathrm{grad}}
+
\alpha_{\mathrm{div}}\,\mathcal{L}_{\mathrm{div}},
\label{eq:loss-total}
\end{equation}
with \(\alpha_{\mathrm{vel}}=0.89\), \(\alpha_{\mathrm{grad}}=0.085\), and \(\alpha_{\mathrm{div}}=0.025\), following \citet{tofighian2024conditional}.

To account for non-uniform cell sizes, all losses are computed as volume-weighted discrete norms using the normalized weight \(\omega_P=V_P/V_\Omega\), where \(V_P\) is the volume of cell \(P\) and \(V_\Omega\) is the total domain volume.

Let \(\tilde{\mathbf{u}}^{SR}(P)\) and \(\tilde{\mathbf{u}}^{DNS}(P)\) denote the predicted SR and target velocities in normalized form. The velocity loss term uses a Charbonnier penalty \citep{charbonnier1994two}, a smooth \(L_1\) loss,
\begin{equation}
\mathcal{L}_{\mathrm{vel}}
=
\sum_{P}
\omega_P\,
\sum_{i=1}^{3}
\sqrt{\left(\tilde{u}_i^{SR}(P)-\tilde{u}_i^{DNS}(P)\right)^2+\varepsilon^2},
\qquad \varepsilon=10^{-3}.
\label{eq:loss-vel}
\end{equation}
To promote accurate small-scale structure, we additionally match velocity gradients with an \(L_2\) loss,
\begin{equation}
\mathcal{L}_{\mathrm{grad}}
=
\sum_{P}
\omega_P\,
\left\|
\nabla\tilde{\mathbf{u}}^{SR}(P)
-
\nabla\tilde{\mathbf{u}}^{DNS}(P)
\right\|_2^2,
\label{eq:loss-grad}
\end{equation}
and the divergence loss penalizes violations of incompressibility,
\begin{equation}
\mathcal{L}_{\mathrm{div}}
=
\sum_{P}
\omega_P\,
\left(
\nabla\!\cdot\tilde{\mathbf{u}}^{SR}(P)
\right)^2,
\label{eq:loss-div}
\end{equation}

Training was performed using the Adam optimizer with a learning rate of \(10^{-4}\). The model was trained for approximately \(15\) hours and training was stopped once the validation losses reached a plateau. All runs were carried out on a single NVIDIA H200 GPU of the HoreKa supercomputer at the Karlsruhe Institute of Technology (KIT).

\section{Results}\label{sec:results}

\subsection{Turbulent channel flow}

Figure~\ref{fig:channel_velocity} presents instantaneous velocity fluctuations in the $x$–$y$ plane of the turbulent channel flow, comparing the three components $u_x'/u_{\mathrm{in}}$, $u_y'/u_{\mathrm{in}}$, and $u_z'/u_{\mathrm{in}}$ reconstructed by the proposed FVCNN against the DNS reference and the LR input. It is evident that the LR field retains the large-scale distribution of velocity fluctuations but loses much of the fine-scale structure due to the low-pass filtering and $4\times$ downsampling. This results in spatially diffuse contours, with weaker local variations and less distinct boundaries between positive and negative fluctuation regions. The FVCNN reconstruction shows close agreement with the DNS field across all three velocity components, recovering considerably finer spatial structure than that present in the LR input. For the streamwise component, the elongated regions of positive and negative fluctuations near the walls are reproduced with similar locations and shapes to those in the DNS. The improvement is also apparent in the wall-normal and spanwise components, where the localized alternating fluctuations, largely smoothed out in the LR field, are recovered by the network. Both the fine near-wall structures and the broader patterns in the channel interior are captured.

\begin{figure}
  \centering
  \includegraphics[width=1.0\textwidth]{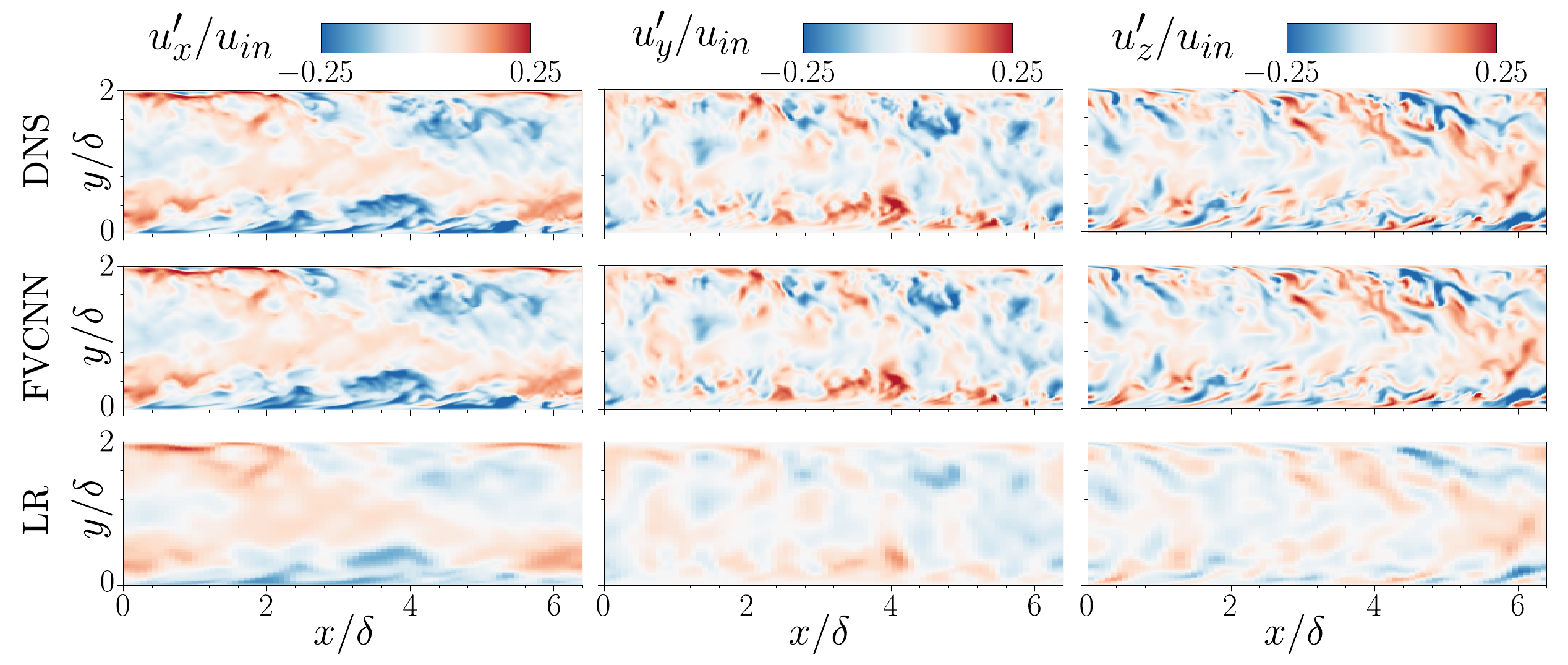}
  \caption{Instantaneous velocity fluctuations in the $x$–$y$ plane of turbulent channel flow at $Re_\tau=395$. Rows show DNS (top), FVCNN reconstruction (middle), and LR input (bottom). Columns correspond to the streamwise, wall-normal, and spanwise velocity fluctuations, respectively, normalized by $u_{\mathrm{in}}$. Spatial coordinates are normalized by the channel half-height $\delta$.}
  \label{fig:channel_velocity}
\end{figure}

Fig.~\ref{fig:R_channel}(a) compares wall-normal profiles of the Reynolds stresses in inner scaling, $\langle u_i'u_j'\rangle^+$, as functions of $y^+$. The LR field exhibits the expected attenuation of turbulent intensity and suppressed momentum transfer due to the low-pass filtering and downsampling. In contrast, the FVCNN reconstruction closely reproduces the DNS profile in both peak location and peak magnitude for the streamwise component $\langle u_x'u_x'\rangle^+$. The near-wall maximum is captured with the correct sharp rise from the viscous sublayer and a peak in the buffer layer ($y^+\approx15$). The FVCNN also recovers the overall trends of the wall-normal and spanwise components, $\langle u_y'u_y'\rangle^+$ and $\langle u_z'u_z'\rangle^+$. However, it shows a small underestimation of these two components around their peak region. The deficit decreases gradually with increasing $y^+$ and becomes negligible towards the channel centre. The Reynolds shear stress $\langle u_x'u_y'\rangle^+$ is recovered with the correct sign and a distinct negative minimum, demonstrating that the FVCNN reconstruction restores the turbulent momentum transfer that is strongly damped in the LR input.

To further assess the Reynolds-stress predictions, the state of turbulence anisotropy is visualized using a barycentric map \citep{banerjee2007presentation} in Fig.~\ref{fig:R_channel}(b). The anisotropy tensor corresponding to the Reynolds-stress tensor $R_{ij}=\langle u_i'u_j'\rangle$ is mapped onto the barycentric triangle whose vertices correspond to the one-component (1C), two-component (2C), and three-component isotropic (3C) limiting states. The trajectories are colored by wall distance $y^+$, so that the evolution of the anisotropy state from the viscous sublayer to the channel center can be traced continuously. The DNS trajectory starts near the two-component boundary in the near-wall limit, where wall-normal fluctuations are suppressed, and subsequently migrates toward the isotropic vertex with increasing $y^+$, consistent with the well-known return to isotropy in the channel core \citep{pope2001turbulent}. However, the LR field remains biased towards a more anisotropic state across most of the wall-normal range. The FVCNN reconstruction closely follows the DNS trajectory, with only a slight delay in the approach toward isotropy as $y^+$ increases. This deviation is small, and the overall evolution of the anisotropy state is well reproduced.

\begin{figure}
  \centering
  \includegraphics[width=1.0\textwidth]{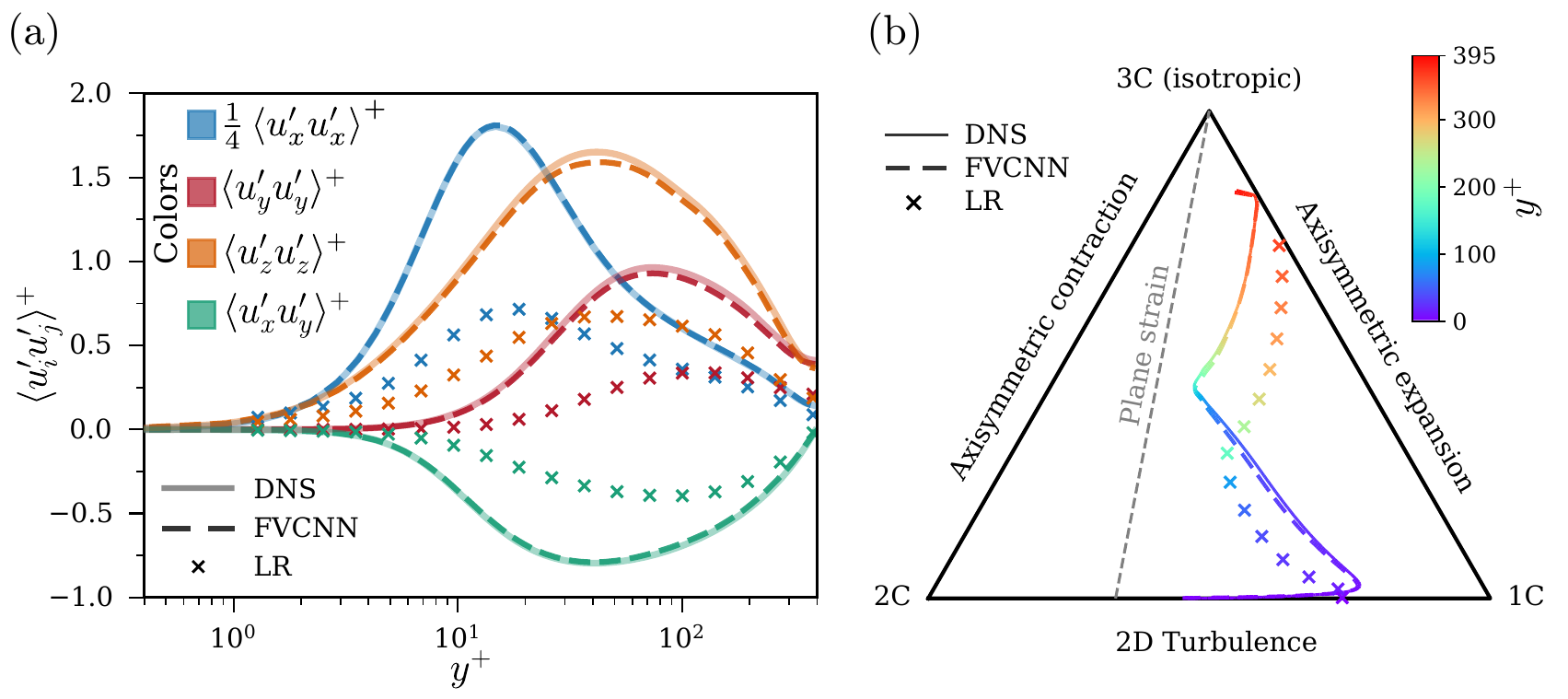}
  \caption{Assessment of Reynolds-stress reconstruction in turbulent channel flow at $Re_\tau=395$: (a) wall-normal profiles of Reynolds stresses in inner scaling and (b) barycentric map of Reynolds-stress anisotropy \citep{banerjee2007presentation}. The streamwise normal stress in (a) is divided by four for clarity. Solid lines denote DNS, dashed lines denote FVCNN, and crosses denote LR data. In (b), colors indicate wall distance $y^+$, and the vertices represent the one-component (1C), two-component (2C), and three-component isotropic (3C) limiting states.}
\label{fig:R_channel}
\end{figure}

To assess the ability of the FVCNN to reproduce the scale-by-scale distribution of fluctuation energy, Fig.~\ref{fig:channel_FFT} compares the spanwise energy spectra of the three velocity components at $y^+=15$ and $y^+=60$ against the DNS reference. As expected, the LR energy spectra deviate from the DNS spectra due to the filtering process, especially at high wavenumbers. This deviation is already apparent below the Nyquist limit of the LR mesh and becomes more pronounced as $k_z^+$ increases. It is seen from Fig.~\ref{fig:channel_FFT} that the FVCNN spectra closely match the DNS results at low and intermediate wavenumbers for all three components and both wall distances. The wall-normal spectrum at $y^+=15$ is also recovered with a peak location and magnitude close to those of the DNS, whereas the LR field substantially under-predicts this peak.

Beyond the LR Nyquist limit, indicated by the shaded regions, the FVCNN recovers much of the energy removed from the input field. This observation is consistent with the velocity-field comparison in Fig.~\ref{fig:channel_velocity}, where the FVCNN reconstructs fine-scale structures that are absent from the LR input. However, at high wavenumbers, the reconstructed spectra generally under-predict the energy compared to the DNS, indicating that the smallest-scale fluctuations are only partially recovered.

\begin{figure}
  \centering
  \includegraphics[width=0.7\textwidth]{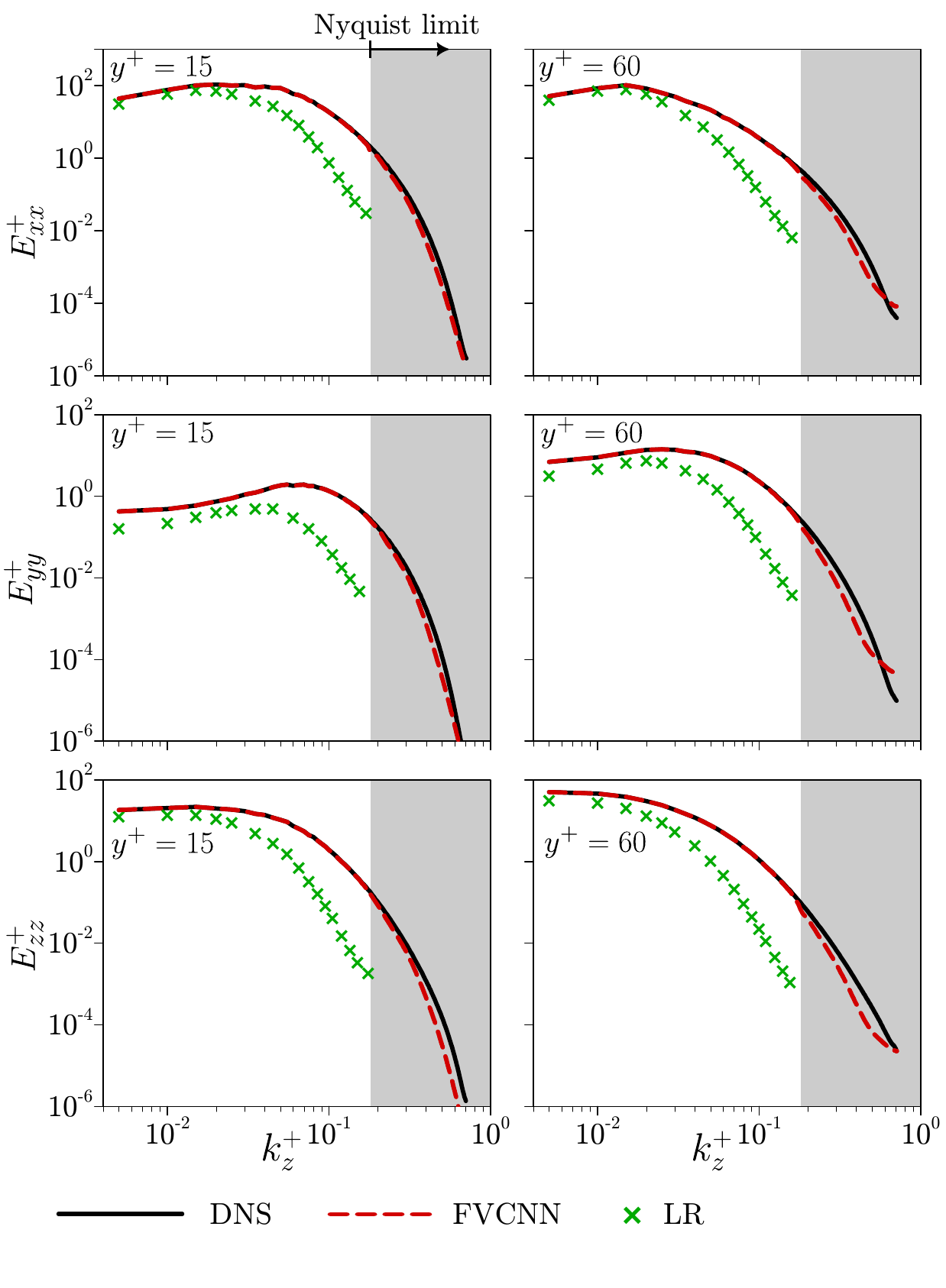}
  \caption{Comparison of one-dimensional spanwise energy spectra in turbulent channel flow at $Re_\tau=395$. Spectra are shown in inner scaling at $y^+=15$ (left column) and $y^+=60$ (right column). Rows correspond to the streamwise, wall-normal, and spanwise velocity components, respectively. Black solid lines denote DNS, red dashed lines denote FVCNN, and green crosses denote LR data. Shaded regions indicate wavenumbers beyond the Nyquist limit of the LR mesh.}
  \label{fig:channel_FFT}
\end{figure}

\subsubsection{Effect of wall-proximity encoding}
To assess the importance of the Van Driest damping function as a conditional input channel, we perform an ablation study \citep{shamooni2025super} at inference time by replacing the wall-proximity input with a spatially uniform value ($D=1$) or setting it to zero ($D=0$). Figure~\ref{fig:vanD} shows the resulting barycentric maps of total Reynolds-stress anisotropy. When the wall-proximity channel is set to zero ($D=0$), the anisotropy trajectory shifts toward the 2C vertex. This is physically consistent, as $D=0$ corresponds to the wall limit, and the network responds with stronger suppression of wall-normal fluctuations, driving the reconstructed turbulence toward a two-component state across the entire wall-normal range. Conversely, supplying a uniform value ($D=1$) shifts the near-wall trajectory toward the isotropic vertex. Since $D\approx 1$ characterizes the region far from the wall where turbulence is approximately isotropic, the network interprets this input as an absence of wall influence and accordingly reduces the anisotropy of the reconstructed field, even at locations where the true turbulence is strongly anisotropic. Away from the wall ($y^+\gtrsim 100$), where the Van Driest function is itself close to unity, the $D=1$ trajectory converges toward the DNS, as expected. The ablation results show that the network uses the Van Driest input in a physically consistent manner, modulating the reconstructed anisotropy according to wall proximity. The degradation observed in both modified cases indicates that wall-distance information cannot be recovered from the velocity field alone and its inclusion is essential for an accurate reconstruction of the anisotropy state in wall-bounded turbulence.
\begin{figure}
  \centering
  \includegraphics[width=0.5\textwidth]{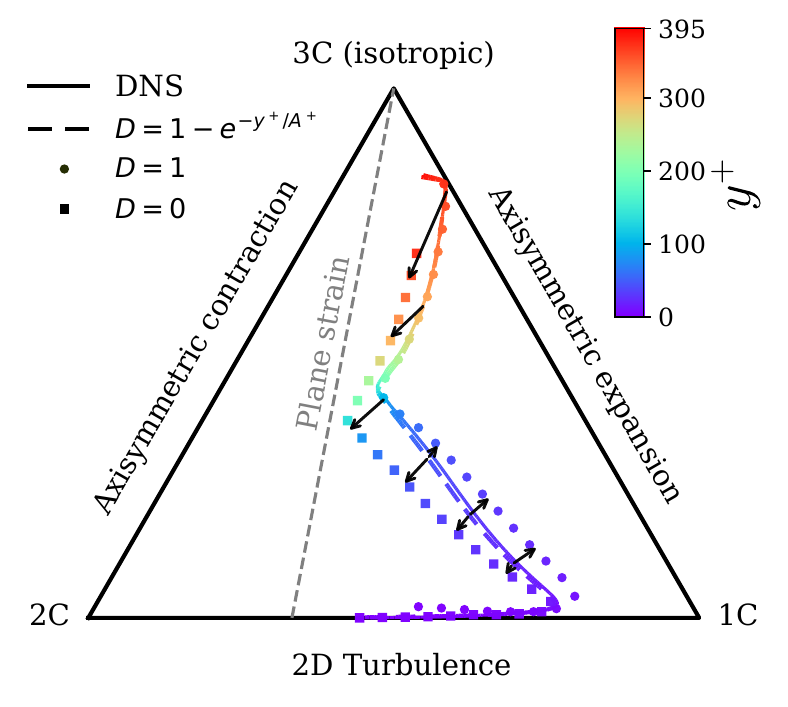}
  \caption{Barycentric map of total Reynolds-stress anisotropy coloured by wall distance $y^+$, comparing the DNS with FVCNN reconstructions using three wall-proximity inputs: the Van Driest damping function $D = 1 - e^{-y^+/A^+}$, a uniform value $D = 1$, and a zero value $D = 0$. The vertices 1C, 2C, and 3C correspond to the one-, two-, and three-component (isotropic) limiting states, respectively. Arrows show the change in anisotropy at the same $y^+$ when the input is switched from $D = 1 - e^{-y^+/A^+}$ to $D=0$ or $D=1$.}
\label{fig:vanD}
\end{figure}

\subsection{Turbulent circular-pipe flow}\label{sec:pipe}
The FVCNN is further evaluated for turbulent circular-pipe flow at $Re_\tau=360$. Unlike the channel mesh, which is orthogonal but non-uniform, the butterfly topology of the pipe cross-section introduces both non-uniform spacing and non-orthogonal cells. This configuration therefore extends the assessment to a more complex mesh geometry.

Fig.~\ref{fig:R_pipe}~(a) compares the Reynolds-stress profiles in inner scaling as functions of the wall distance $(R-r)^+$. The stresses are expressed in cylindrical coordinates, where $u_x$, $u_r$, and $u_\theta$ denote the axial, radial, and azimuthal velocity components, respectively. Similar to the channel case, the LR field exhibits considerable attenuation of the normal stresses and the magnitude of the Reynolds shear stress due to the low-pass filtering and downsampling. In contrast, the FVCNN closely follows the DNS profiles across the wall-normal range. The axial normal stress $\langle u_x'u_x'\rangle^+$ is recovered with a near-wall peak close to the DNS in both location and magnitude. The radial and azimuthal normal stresses are also well reproduced, although small deficits remain around their peak regions. The shear stress, plotted as $-\langle u_x'u_r'\rangle^+$, closely matches the DNS, including its negative minimum and subsequent approach to zero at the pipe axis. This agreement indicates that the reconstruction recovers both the velocity fluctuation intensities and the turbulent momentum transfer substantially attenuated in the LR input.

The corresponding barycentric map is shown in Fig.~\ref{fig:R_pipe}~(b), with trajectories colored by wall distance $(R-r)^+$. The DNS trajectory lies near the two-component boundary in the near-wall limit and initially moves towards the one-component vertex, reflecting the dominance of axial velocity fluctuations in the buffer layer. This behavior is consistent with the pronounced axial normal-stress peak near $(R-r)^+\approx15$ in Fig.~\ref{fig:R_pipe}~(a). With increasing distance from the wall, the trajectory turns towards the isotropic vertex. The LR trajectory deviates from the DNS evolution, particularly at intermediate and larger wall distances, where it remains farther from the isotropic vertex. In contrast, the FVCNN closely follows the DNS trajectory over the wall-normal range, with only minor differences. Similar to the channel case, the FVCNN accurately reproduces both the Reynolds-stress magnitudes and their anisotropy on the non-orthogonal pipe mesh.
\begin{figure}
  \centering
  \includegraphics[width=1.00\textwidth]{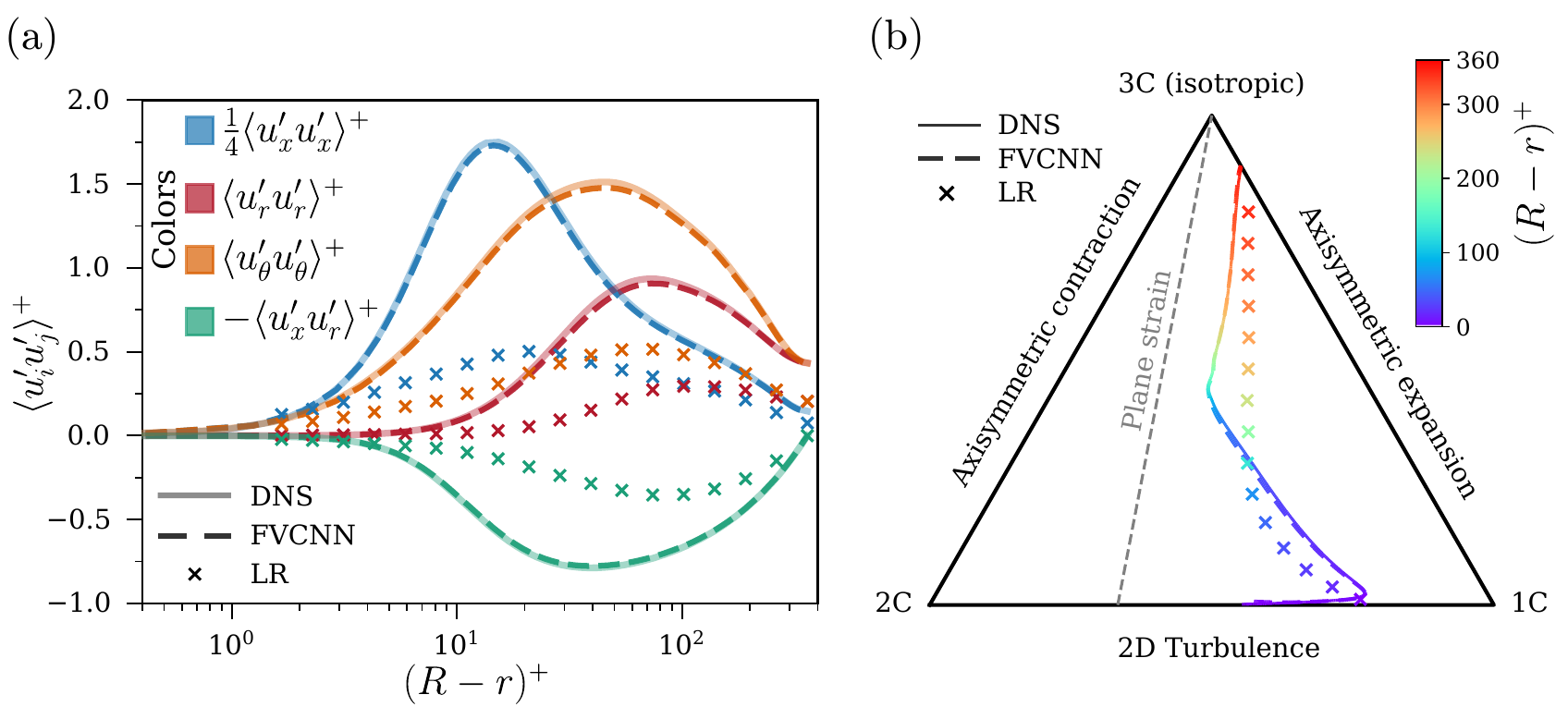}
  \caption{Assessment of Reynolds-stress reconstruction in turbulent circular-pipe flow at \(Re_\tau=360\): (a) Reynolds-stress profiles in inner scaling versus wall distance \((R-r)^+\), comparing DNS, FVCNN, and LR data; (b) barycentric map of Reynolds-stress anisotropy, colored by \((R-r)^+\). The axial normal stress in (a) is divided by four for clarity. The vertices in (b) represent the one-component (1C), two-component (2C), and three-component isotropic (3C) limiting states.}
\label{fig:R_pipe}
\end{figure}

Fig.~\ref{fig:pipe_FFT} compares the one-dimensional axial energy spectra $E_{xx}^+$ and $E_{rr}^+$ at wall distances $(R-r)^+=15$ and $60$. The LR spectra exhibit an energy deficit for both velocity components at both wall distances. This deficit increases with wavenumber and is already substantial below the Nyquist limit of the LR mesh, reflecting the attenuation introduced by filtering. In contrast, the FVCNN closely reproduces the DNS spectra at low and intermediate wavenumbers. The agreement continues beyond the LR Nyquist limit, indicated by the shaded regions, where the reconstruction recovers much of the energy absent from the coarse input.

Differences between the FVCNN and DNS become apparent at higher wavenumbers. The reconstructed spectra first under-predict the DNS energy over part of this range and then exceed it at the highest plotted wavenumbers. The latter deviation is more evident at $(R-r)^+=60$, where the DNS spectra continue to decrease sharply while the reconstructed spectra exhibit a slower decay. Thus, despite the close agreement over most of the plotted range, the energy distribution at the smallest reconstructed scales is not fully reproduced.

\begin{figure}
  \centering
  \includegraphics[width=0.7\textwidth]{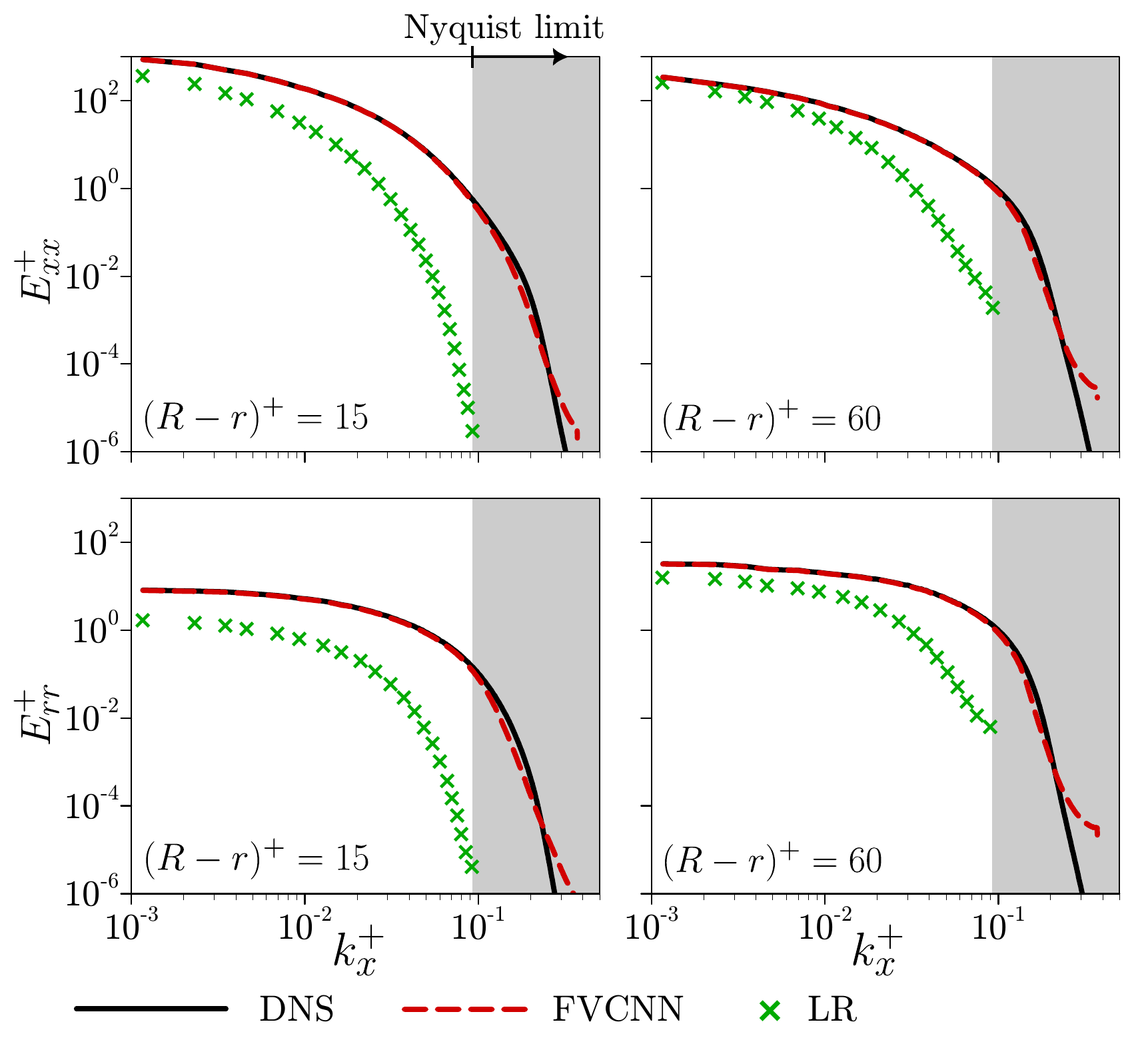}
  \caption{Comparison of one-dimensional axial energy spectra in turbulent circular-pipe flow at $Re_\tau=360$. Spectra are shown in inner scaling at $(R-r)^+=15$ (left column) and $(R-r)^+=60$ (right column). Rows correspond to the axial and radial velocity components, respectively. Black solid lines denote DNS, red dashed lines denote FVCNN, and green crosses denote LR data. Shaded regions indicate wavenumbers beyond the Nyquist limit of the LR mesh.}
  \label{fig:pipe_FFT}
\end{figure}

\subsection{Evaluation on an unseen geometry: turbulent elliptic-pipe flow}\label{sec:elliptic} 

To assess generalization to an unseen geometry, the FVCNN is evaluated on turbulent flow through an elliptic pipe with cross-sectional aspect ratio $a/b=2$. This configuration extends the assessment to a non-circular cross-section that was not included in the training data. Figure~\ref{fig:ellip}~(a,b) compares instantaneous cross-sectional distributions of the normalized axial velocity, $u_x/u_{\mathrm{in}}$, and axial vorticity, $\omega_x b/u_{\mathrm{in}}$, obtained from DNS, the FVCNN reconstruction, and the LR input. The LR velocity field preserves the approximate locations of the high- and low-speed regions but smooths out the finer spatial variations. The FVCNN recovers much of this missing detail, producing sharper boundaries between these regions and restoring smaller structures visible in the DNS. The improvement is particularly apparent in the axial vorticity, for which downsampling strongly attenuate the localized regions of positive and negative rotation. The reconstructed field recovers many of these structures, including the closely spaced patches near the wall. Nevertheless, some differences remain in the shapes and intensities of individual vorticity patches.

The PDFs at the marked location, situated 60 wall units from the wall, provide a complementary statistical assessment (Fig.~\ref{fig:ellip}~(c,d)). The LR axial-velocity distribution is narrower than the DNS distribution, with reduced tails, whereas the FVCNN closely reproduces its overall shape and spread. The contrast is more pronounced for axial vorticity, whose LR PDF is sharply concentrated around zero. Since vorticity involves spatial derivatives of velocity, it places greater emphasis on small-scale motions and is therefore more sensitive to the loss of fine-scale content through downsampling. The FVCNN recovers the spread of the DNS vorticity distribution and closely follows its shape. However, the reconstructed vorticity PDF slightly exceeds the DNS PDF in portions of the tails, indicating a modest overprediction of the probability of extreme axial-vorticity events at this location. 
\begin{figure}
  \centering
  \includegraphics[width=0.8\textwidth]{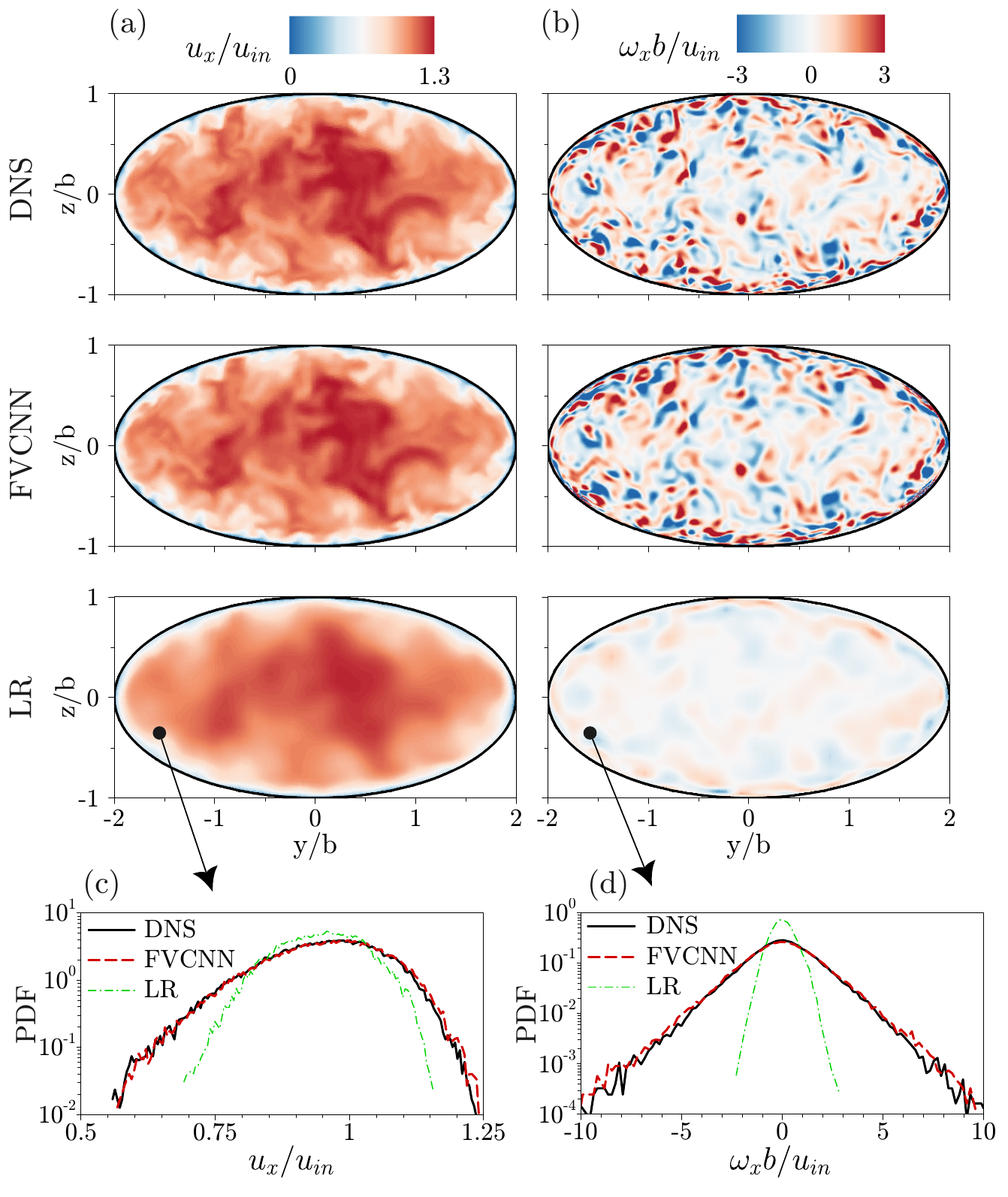}
  \caption{Instantaneous fields and local probability density functions (PDFs) in turbulent elliptic-pipe flow with aspect ratio $a/b=2$. Cross-sectional distributions of (a) normalized axial velocity, $u_x/u_{\mathrm{in}}$, and (b) normalized axial vorticity, $\omega_x b/u_{\mathrm{in}}$, are shown for DNS (top), the FVCNN reconstruction (middle), and the LR input (bottom). Coordinates are normalized by the semi-minor axis $b$, and $u_{\mathrm{in}}$ denotes the bulk velocity. Panels (c) and (d) compare the PDFs of the corresponding quantities at the location marked by the black dots, situated 60 wall units from the wall.}
  \label{fig:ellip}
\end{figure}

The spectral comparison for the elliptic pipe is presented in Fig.~\ref{fig:ellip_FFT}, using the same location as the PDFs, marked in Fig.~\ref{fig:ellip} and situated 60 wall units from the wall. The LR spectra increasingly underestimate the DNS energy as the wavenumber increases, with substantial attenuation already below the Nyquist limit of the LR mesh. The FVCNN closely follows the DNS spectra at low and intermediate wavenumbers for all three velocity components. It also recovers energy beyond the LR Nyquist limit and captures the steep decay of the DNS spectra.

However, the FVCNN overpredicts the energy around and above the LR Nyquist limit, particularly for the two transverse velocity components. This excess is qualitatively consistent with the slight overestimation of the axial-vorticity PDF tails in Fig.~\ref{fig:ellip}(d), suggesting that some of the reconstructed small-scale fluctuations are too strong. The excess energy near the LR Nyquist limit is not observed in the channel and circular pipe results shown in Figs.~\ref{fig:channel_FFT} and \ref{fig:pipe_FFT}, respectively, where the FVCNN follows the DNS more closely in this wavenumber range. Therefore, The FVCNN reproduces the main flow features and local statistics in the unseen elliptic pipe, although the reconstruction of these smaller scales is less accurate than in the geometries represented in the training data.

\begin{figure}
  \centering
  \includegraphics[width=1.0\textwidth]{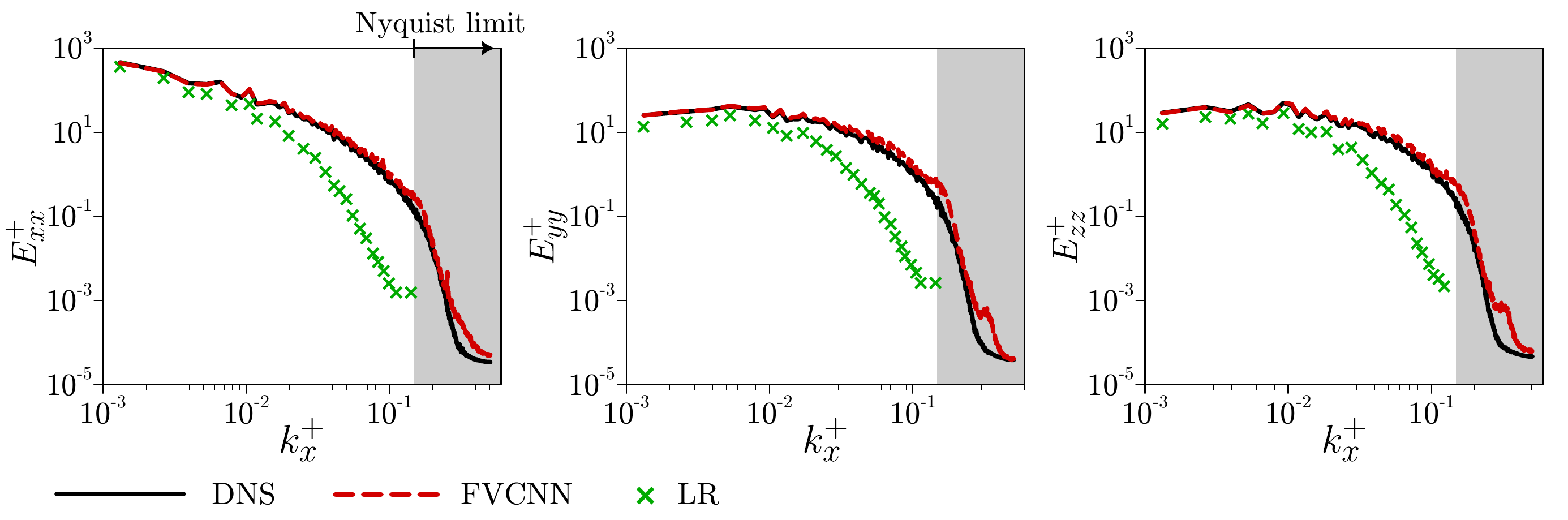}
  \caption{One-dimensional axial energy spectra in turbulent elliptic-pipe flow at the location marked in Fig.~\ref{fig:ellip}, situated 60 wall units from the wall. From left to right, the panels show $E_{xx}^+$, $E_{yy}^+$, and $E_{zz}^+$ as functions of the axial wavenumber $k_x^+$, with all quantities expressed in inner units. Black solid lines denote DNS, red dashed lines denote FVCNN, and green crosses denote LR data. Shaded regions indicate wavenumbers beyond the Nyquist limit of the LR mesh.}
  \label{fig:ellip_FFT}
\end{figure}

\section{Conclusion}\label{sec:Conclusion}
This work extends convolutional neural networks to general 3D finite-volume meshes, removing the restriction to uniform, structured grids that has limited previous CNN-based approaches.

The central idea is to construct convolutional layers using finite-volume operators that remain well-defined on non-uniform and unstructured meshes. This construction builds on an exact decomposition of a 3D convolution as a linear combination of successive axis-wise convolutions. This decomposition is exact for any set of three linearly independent one-dimensional filters forming a basis. We choose the identity, gradient, and averaging operators as the basis, because each can be evaluated on arbitrary polyhedral cells using standard finite-volume discretization schemes and reduces to its finite-difference counterpart on a uniform Cartesian grid. The resulting finite-volume convolution unit (FVCU) serves as a drop-in replacement for the standard convolution layer, allowing established deep-learning architectures to operate directly on finite-volume data. In this work, FVCU is embedded in a residual in residual dense neural network for super-resolution reconstruction of turbulent flows.

The proposed model was trained to reconstruct 3D velocity fields from $4\times$ coarsened inputs in turbulent channel flow at $Re_\tau=395$, on a mesh that is non-uniform in the wall-normal direction, and in circular-pipe flow at $Re_\tau=360$, on a non-orthogonal O-grid mesh. Because standard convolutions require uniform grid spacing, a conventional 3D CNN cannot be applied directly to either configuration. In both configurations, FVCNN accurately reproduced the Reynolds-stress profiles and their anisotropy, as well as the energy spectra, and recovered a substantial part of the energy beyond the Nyquist limit of the LR mesh. Additionally, inspired by the role of the Van Driest damping function in near-wall subgrid-scale modelling in large-eddy simulations, this function was introduced as a conditional input to provide the model with wall-proximity information. This information is required because both the size of turbulent structures and the anisotropy of the Reynolds stresses vary strongly with wall distance. An ablation study confirmed that the network interprets this input in a physically consistent manner.

A further advantage of the present formulation is its transferability across geometries. Although the elliptic-pipe geometry with aspect ratio $a/b=2$ was not included in the training data, the network generalized to this configuration at inference without retraining or fine-tuning. The model reproduced the main flow features, the local velocity and vorticity statistics, and the energy spectra at low and intermediate wavenumbers. However, the smallest scales were reconstructed less accurately than in the training geometries, with a modest overprediction of the energy around and above the Nyquist limit of the LR mesh.

Several directions remain open for future work. The reconstruction of the smallest scales, particularly in unseen geometries, may be improved by training on a broader range of geometries, mesh topologies, and Reynolds numbers. To improve the capabilities of the model on strongly non-orthogonal grids, one option is to incorporate non-orthogonal corrections into the finite-volume basis operators, following standard practice in finite-volume CFD methods. Alternatively, training the network on meshes with strong non-orthogonality may enable the model to learn the necessary corrections implicitly, without modifying the basis operators. Since the FVCU is formulated for arbitrary polyhedral cells, its assessment on more complex geometries and on meshes with mixed cell types, such as tetrahedral, prismatic, and polyhedral cells, is a natural next step. Moreover, because the FVCU is based on standard finite-volume discretizations that are already implemented in established CFD codes, the integration of the method into existing solvers is facilitated. More broadly, although the present work focused on super-resolution reconstruction, the FVCU provides a general framework for applying deep learning on finite-volume meshes. Its application to other tasks, such as turbulence modelling, flow prediction, or data assimilation on unstructured grids, is a promising direction for future research.

\backsection[Acknowledgements]{This work was performed on the HoreKa supercomputer funded by the Ministry of Science, Research and the Arts Baden-Württemberg and by the Federal Ministry of Education and Research. The authors acknowledge support by the state of Baden-Württemberg through bwHPC.} 

\backsection[Declaration of interests]{The authors report no conflict of interest.}

\backsection[Author contributions]{Hesam Tofighian: Conceptualization (lead); Formal analysis (lead); Investigation (lead); Methodology (lead); Software (lead); Visualization (lead); Writing – original draft (lead). Jordan A. Denev: Resources (equal); Supervision (equal); Writing – review \& editing (equal). Nikolai Kornev: Project administration (lead); Resources (equal); Supervision (equal); Writing – review \& editing (equal).}

\backsection[Data availability]{The data that support the findings of this study are available from the corresponding author upon reasonable request.}

\appendix
\section{}\label{appA}
\begin{proposition}
\label{thm:axiswise-1D}
Let $\{b_1,b_2,b_3\}$ be a basis of $\mathbb{R}^3$ and set $\mathcal{K}:=\mathbb{R}^{3\times3\times3}$. For a discrete field $X:\mathbb{Z}^3\to\mathbb{R}$, denote by $b\star_x X$, $b\star_y X$, $b\star_z X$ the length-$3$ 1D convolutions along the $x$-, $y$-, and $z$-axes, respectively, and by $\star$ the standard $3$D convolution. 

Then for every $K\in\mathcal{K}$ there exist unique coefficients $c_{pqr}\in\mathbb{R}$, $p,q,r\in\{1,2,3\}$, such that for all $X$,
\begin{equation}
\label{eq:main}
(K\star X) \;=\; \sum_{p,q,r=1}^{3} c_{pqr}\,\Bigl(b_r\star_z\bigl(b_q\star_y(b_p\star_x X)\bigr)\Bigr).
\end{equation}
\end{proposition}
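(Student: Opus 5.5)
The plan is to identify both sides of \eqref{eq:main} with elements of the tensor product $\mathbb{R}^3\otimes\mathbb{R}^3\otimes\mathbb{R}^3\cong\mathcal{K}$ and use the fact that tensor products of bases are bases. Throughout, convolution conventions (correlation vs.\ flipped kernel, index offsets $\{-1,0,1\}$) are fixed identically for the 1D and 3D operators.

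The first step is to compute the composite axis-wise operator explicitly. For $b,b',b''\in\mathbb{R}^3$, I will expand the definitions to show that
\[
b''\star_z\bigl(b'\star_y(b\star_x X)\bigr)=(b\otimes b'\otimes b'')\star X,
\]
where $(b\otimes b'\otimes b'')_{ijk}=b_i\,b'_j\,b''_k$. Writing out the triple sum over offsets $i,j,k\in\{-1,0,1\}$, the three 1D sums nest into a single 3D sum with the product kernel. The sums are finite, so they can be exchanged freely. By linearity of $K\mapsto K\star X$, the right-hand side of \eqref{eq:main} then equals $\bigl(\sum c_{pqr}\,b_p\otimes b_q\otimes b_r\bigr)\star X$.

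The second step is existence. The $27$ tensors $T_{pqr}=b_p\otimes b_q\otimes b_r$ form a basis of $\mathcal{K}$, which I will prove directly. Let $B$ be the invertible $3\times3$ matrix with columns $b_p$. The map $c\mapsto\sum c_{pqr}T_{pqr}$ is the linear map $c\mapsto(B\otimes B\otimes B)c$ on $\mathbb{R}^{27}$. Concretely, $K_{ijk}=\sum_{p,q,r}B_{ip}B_{jq}B_{kr}c_{pqr}$, and this map is invertible with inverse given by $B^{-1}$ applied in each index. So every $K$ equals $\sum c_{pqr}T_{pqr}$ with $c_{pqr}=\sum_{i,j,k}(B^{-1})_{pi}(B^{-1})_{qj}(B^{-1})_{rk}K_{ijk}$, which proves existence.

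The third step is uniqueness, and it is the only point needing care. Uniqueness of the coefficient tensor in the basis is immediate. What must also be shown is that the operator identity for all $X$ determines the kernel, i.e.\ that $K\mapsto(X\mapsto K\star X)$ is injective. I will test with a delta field $X=\delta_0$: $K\star\delta_0$ reproduces the entries of $K$ (possibly reflected) at the $27$ offsets. Hence two kernels giving the same operator coincide, so if two coefficient sets both satisfy \eqref{eq:main}, their kernels $\sum c_{pqr}T_{pqr}$ and $\sum c'_{pqr}T_{pqr}$ are equal. Linear independence of the $T_{pqr}$ then gives $c=c'$.
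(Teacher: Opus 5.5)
Your proposal is correct and follows essentially the same route as the paper. Invertibility of $B\otimes B\otimes B$ gives the basis $\{b_p\otimes b_q\otimes b_r\}$ of $\mathcal{K}$, separability turns each basis kernel's 3D convolution into the nested axis-wise convolutions, and linearity in the kernel gives existence. Your delta-field argument that $K\mapsto(X\mapsto K\star X)$ is injective is a worthwhile addition, since the paper's uniqueness claim (``follows from uniqueness of the basis expansion of $K$'') tacitly relies on exactly this fact.
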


\begin{proof}
Let $B=[\,b_1\ b_2\ b_3\,]\in\mathbb{R}^{3\times 3}$, which is invertible since $\{b_i\}$ is a basis. Hence the Kronecker product $T:=B\otimes B\otimes B\in\mathbb{R}^{27\times 27}$ is invertible. Therefore the $27$ tensors \[E_{pqr}:=b_p\otimes b_q\otimes b_r,\qquad p,q,r\in\{1,2,3\},\] form a basis of $\mathcal{K}$, so each $K\in\mathcal{K}$ admits a unique expansion
\[K=\sum_{p,q,r} c_{pqr}E_{pqr}.\]
Next, each $E_{pqr}$ is separable across the three coordinate axes, and therefore its $3$D convolution with any field $X$ factorizes as successive 1D convolutions:
\begin{equation}
\label{eq:separable-identity}
(E_{pqr}\star X) \;=\; b_r\star_z\bigl(b_q\star_y(b_p\star_x X)\bigr).
\end{equation}
Using the linearity of convolution in the kernel, we obtain
\[
K\star X
=\sum_{p,q,r} c_{pqr}\,(E_{pqr}\star X)
=
\sum_{p,q,r} c_{pqr}\,\Bigl(b_r\star_z\bigl(b_q\star_y(b_p\star_x X)\bigr)\Bigr),
\]
which is \eqref{eq:main}. Uniqueness of the coefficients $\{c_{pqr}\}$ follows from uniqueness of the basis expansion of $K$.
\end{proof}

\bibliographystyle{jfm}
\bibliography{jfm}
\end{document}